\newcommand{\ZZ}{{\mathbbm{Z}}}
\newcommand{\NN}{{\mathbbm{N}}}
\newcommand{\B}{{\mathcal B}}
\newcommand{\A}{{\mathcal A}}
\newcommand{\calS}{{\mathcal S}}
\newcommand{\f}{{\mathbf f}}
\begin{document}

\markboth{Henryk Fuk\'s \& Yucen Jin}
{Approximating dynamics of a number-conserving cellular automaton by
a finite-dimensional dynamical system}

\catchline{}{}{}{}{}

\title{Approximating dynamics of a number-conserving cellular automaton by
a finite-dimensional dynamical system
}

\author{Henryk Fuk\'s}

\address{Department of Mathematics and Statistics\\
         Brock University\\
         St. Catharines, ON L2S3A1, Canada\\
hfuks@@brocku.ca}

\author{Yucen Jin}

\address{Department of Mathematics and Statistics\\
         Brock University\\
         St. Catharines, ON L2S3A1, Canada\\
yj12qj@brocku.ca}

\maketitle

\begin{history}
\received{March 10, 2020}
\revised{August 24, 2020}
\end{history}

\begin{abstract}
The local structure theory for cellular automata (CA) can be viewed as an finite-dimensional approximation of infinitely-dimensional system.  While it is well known that this approximation works surprisingly well for some cellular automata, it is still not clear
why it is the case, and which CA rules have this property. In order to shed some light on this problem, we present 
an example of a four input CA for which probabilities of occurrence of short blocks of symbols can be computed
exactly. This rule is number conserving and possesses a blocking word. Its local structure approximation correctly predicts
steady-state probabilities of small length blocks, and we present a rigorous proof of this fact, without resorting
to numerical simulations. We conjecture that the number-conserving property together with the existence of
the blocking word are responsible for the observed perfect agreement between the finite-dimensional approximation
and the actual infinite-dimensional dynamical system.

\keywords{cellular automata, mean field, local structure, additive invariants}
\end{abstract}

\ccode{PACS Nos.: 89.75.-k, 47.11.Qr}

\section{Introduction}

The idea of mean-field approximation or mean-field theory is a well established concept
in statistical physics and related fields. In the context of lattice gas models, 
the mean field theory approximates dynamics of the infinitely-dimensional lattice gas system by 
neglecting correlations between lattice sites. 

In 1970's and 1980's, various generalizations of the mean field theory have been proposed, most notably
in works of H.J. Brascamp \cite{Brascamp71} as well as M. Fannes and A. Verbeure \cite{Fannes84}. In late 1980's, H. Gutowitz et al. applied these ideas to cellular automata (CA), proposing
the so-called local structure theory, \cite{gutowitz87a} which included mean field theory as a special case. In spite of being over three decades old, the local structure theory is still not fully understood, and many of its aspects  remain unexplored. In particular, it is still not clear why some CA
are well approximated by the local structure theory, and
how to identify such rules in large rule spaces. This problem
will be further  referred to as the ``performance problem'' of the local structure theory.

In what follows, we will demonstrate an example of a CA rule
which can be viewed as interacting particle system conserving the number of particles and which possesses an equilibrium
state exactly as predicted by the local structure theory. The number of known CA rules of this type is so far very small, and we hope that the example presented here eventually helps to shed some light on the ``performance problem''  of the local structure theory.

Dynamics of one-dimensional cellular automata (CA)  is often studied by treating them as maps in the space of probability measures
over bi-infinite strings (to be called \emph{configurations}). The meaning of this is easy to explain in simple terms. We
consider a large set of configurations drawn from a known probability distribution  (usually the Bernoulli distribution). We then apply  a given  cellular automaton rule to all these configurations. 
As a result, we obtain an assembly of configurations which (usually) is no longer distributed according to  the Bernoulli distribution, but according to some other distribution.  The cellular automaton rule, therefore, 
 transforms the initial probability distribution (or more formally, the initial probability measure)  into some other probability measure. By applying the local rule again and again, one obtains an infinite sequence of measures, to be called the \emph{orbit} of the initial measure.

Such orbits are not easy to describe and study, as the maps generating them are infinitely-dimensional.
One can, however, approximate these maps by finite-dimensional ones, and this is the basis of the aforementioned
local structure theory developed by  H. Gutowitz et al. ~\cite{gutowitz87a}.

The local structure theory has been widely used in CA research, although a relatively few rigorous results are known
about the theory. Often it is used in a following way: one constructs a 
finite-dimensional map or a system of recurrence equations following the algorithm given by Gutowitz  \cite{gutowitz87a}, and 
numerically studies the orbit of this system. Comparison of this orbit with results of direct numerical
simulations of the  CA in question often reveals an excellent agreement between the two.

The problem of comparing numerically computed orbit of local structure theory with results of numerical simulations
is that none of the two are exact. Fortunately, in recent years some techniques have been developed which allow to compute 
elements of orbits Bernoulli measures exactly \cite{paper62}, making  a more rigorous approach possible. 

In Ref. \citen{paper64}, an example of a CA rule is given for which the local structure approximation  correctly reproduces not only 
limiting values of probabilities of short block, but also the type of convergence toward
the fixed point (as a power law). The rule used in this work, namely elementary CA rule 14, possesses so-called additive invariant of
the second order \cite{Hattori91}, conserving the number of pairs 01 between consecutive iterations. One could wonder, therefore,  if
the existence of the additive invariant somewhat ``makes'' the local structure approximation to perform well.

In order to further investigate this problem, we searched for a rule with somewhat simpler additive invariant
(of the first order), which could be studied in detail. Binary rules which possesses first-order additive invariant are called
number-conserving cellular automata rules (NCCA). Among elementary CA, there is only one non trivial NCCA, namely rule
184 (rule 226, which is obtained from rule 184  by spatial reflection, has the same dynamics). This rule has been extensively
studied, and much is known about its dynamics \cite{Krug88,paper4,paper11,Belitsky2005,paper62}.

When one increases the neighbourhood size to 4 sites (e.g., one neighbour on the left and two on the right), the number
of NCCA increases to 22, and one of the most interesting ones of them is rule 56528. Its local function is given by
\begin{align} \label{defrule}
f(0000)&=
f(0001)=
f(0010)=
f(0011)=
f(0101)=
f(1000)=
f(1001)=
f(1101)=0, \nonumber \\ 
f(0100)&=
f(0110)=
f(0111)=
f(1010)=
f(1011)=
f(1100)=
f(1110)=
f(1111)=1.  
\end{align}
Since this rule conserves the number of 1s, one can interpret it as a particle system, where 1s represent individual particles, and
0s represent empty spaces. In this representation, one can show \cite{paper8} that the motion of particles will schematically
be governed by the following rules,
\begin{displaymath}
\overset{\curvearrowright}{10}1,\quad
\overset{\rotatebox{180}{$\circlearrowright$}}{1}00,\quad
\overset{\rotatebox{180}{$\circlearrowright$}}{1}1.
\end{displaymath}
This means that only a particle which has a single zero on the right will move to the right (symbol $\curvearrowright{}$),
while particles followed by 1 or by two or more zeros will stay in the same place (symbol $\overset{\rotatebox{180}{$\circlearrowright$}}{}$).

As we will see in the next section, it is possible to obtain exact expressions for probabilities of blocks of symbols of length up to 3 (and some longer ones) for this rule.

\section{Basic definitions}
Let $\A=\{0,1\}$ be called \emph{a symbol set} or \emph{alphabet}, and let 
$\calS =\{0,1\}^{\ZZ}$
 be the set of all bisequences over $\A$,  to be
called  \emph{a configuration space}. 

{\em A block} or \emph{word} of length $n$ is an ordered set $b_{0} b_{1}
\ldots b_{n-1}$, where $n\in \NN$, $b_i \in \A$.
Let $n\in \NN$ and let
$\B_n$ denote the set of all blocks of length $n$ over $\A$ and $\B$ be
the set of all finite blocks over $\A$.

For $r_l, r_r \in \NN$, a mapping $f:\{0,1\}^{r_l+r_r+1}\mapsto\{0,1\}$ will be called {\em a cellular
 automaton rule of left radius~$r_l$ and right radius $r_r$}. Alternatively, the function $f$ can be
 considered as a mapping of $\B_{r_l+r_r+1}$ into $\B_0=\A=\{0,1\}$. 

Corresponding to $f$ (also called {\em a local mapping}) we define a
 {\em global mapping}  $F:\calS \to \calS$ such that
$
(F(s))_i=f(s_{i-r_l},\ldots,s_i,\ldots,s_{i+r_r})
$
 for any $s\in \calS$.

A {\em block evolution operator} corresponding to $f$ is a mapping
 $\f:\B \mapsto \B$ defined as follows. 
Let $r_l, r_r\in \NN$ be, respectively, the left and the right radius of $f$, and let  $a=a_0a_1 \ldots a_{n-1}\in \B_{n}$
where $n \geq r_l+r_r+1 >0$. Then 
\begin{equation}
\f(a) = \{ f(a_i,a_{i+1},\ldots,a_{i+2r})\}_{i=0}^{n-r_l-r_r-1}.
\end{equation}
 Note that if
$b \in B_{r_l+r_r+1}$ then $f(b)=\f(b)$.
The set of $n$-step preimages of the block $b$ under the rule $f$
is defined as the set $\f^{-n}(b)=\{ c\in \B: \f^n(c)=b\}$. 
The notion of block preimages has been studied
in many earlier works, although in a
different context.\cite{Jen88,Jen89,Voorhees96,mcintosh2009}

Note that the block evolution operator $\mathbf{f}$ returns a block shorther than the argument by $r_l+r_r$.
For example, for the rule defined in eq.~(\ref{defrule}), we have $\mathbf{f}(001101)=010$ because 
$f(0011)=0$, $f(0110)=1$, and $f(1101)=0$. Moreover, the inverse of $\mathbf{f}$ is usually not
single-valued, for example, $\mathbf{f}^{-1}(010)=
\{001000,
001001,
001101,
010101,
110101\}$.

In this paper we will consider only the binary rule with the local function defined by eq. (\ref{defrule}), with $r_l=1$, $r_r=2$.
Binary rules are usually identified
by their Wolfram number $W(f)$  \cite{Wolfram94}. In our case, for the four-input rule  defined in eq. (\ref{defrule}),
the Wolfram number is 
\begin{equation} \label{code3}
W(f)=\sum_{x_1,x_2,x_3, x_4=0}^{1}f(x_1,x_2,x_3, x_4)2^{(2^3x_1+2^2x_2+2^1x_3+2^0x_4)}=56528.
\end{equation}

As already mentioned, a classical problem in cellular automata theory is to compute the probability of the occurrence of a given binary string 
$\mathbf{a}$
in a configuration obtained  after $n$ iterations of the rule, assuming that the initial configuration
is drawn from the Bernoulli distribution. Such probability will be denoted by $P_n(\mathbf{a})$ and called
 \emph{block probability}.
It is easy to show that if the initial distribution is Bernoulli, then 
the probability of occurrence of $\mathbf{a}$ is independent of its position
in the configuration. We will call such block probabilities \emph{shift invariant}. 

Now, let us suppose that the the probability of occurrence of 1 in the initial configuration is $p \in [0,1]$ and
the probability of occurrence of 0 is $q=1-p$. In such a case one can show that  
 the probability of the occurrence of a given binary string 
$\mathbf{a}$
in a configuration obtained  after $n$ iterations of the rule $f$ is given by
\begin{equation} \label{bernoulliP}
P_n(\mathbf{a})=\sum_{\mathbf{b}\in \f^{-n}(\mathbf{a})}  
p^{\#_1(\mathbf{b})}q^{\#_0(\mathbf{b})}.
\end{equation}
where $\#_s(\mathbf{a})$ denotes number of symbols $s$
in $\mathbf{a}$.

We will use the above results to compute block probabilities of some blocks for rule 56528. Before we proceed, let us make
one additional remark about block probabilities.  Block probabilities must satisfy
so-called \emph{Kolmogorov consistency conditions}, so that for any block $\mathbf{a}\in \B$ one has $P_n(\mathbf{a}0)+P_n(\mathbf{a}1)=P_n(\mathbf{a})$. 
For example, we must have $P_n(1)+P_n(0)=1$, $P_n(01)+P_n(00)=P_n(0)$, etc. Consistency
conditions can be used to express some block probabilities by others. One can show that
for binary strings, among probabilities of blocks of length $k$, only $2^{k-1}$ are independent \cite{paper50},
in the sense that one can choose $2^{k-1}$ block probabilities which are not linked to each other via
consistency conditions. For example, for blocks of length up to $3$, there are 14 block probabilities,
$P_n(0)$,  $P_n(1)$, $P_n(00)$, $P_n(01)$, $P_n(10)$, $P_n(11)$
$P_n(000)$, $P_n(001)$, $P_n(010)$, $P_n(011)$, $P_n(100)$, $P_n(101)$, $P_n(110)$, and $P_n(111)$.
Among them only $2^{3-1}=4$ are independent.There is some freedom in choosing which ones are
to be treated as independent, but a common choice is to take  $P_n(0)$, $P_n(00)$, $P_n(000)$, and
$P_n(010)$ as independent blocks. This is called the \emph{short block representation} (see  Ref.~\citen{paper50} for the details of the algorithm for choosing 
independent blocks). 
Using consistency conditions, one can now express the remaining blocks of length up to 3 
in terms of the aforementioned four block probabilities, as follows: 
\begin{align} \label{dependentProbs}
P_n(1)&=1-P_n(0), \nonumber \\ 
P_n(01)&=P_n(0)-P_n(00), \nonumber \\
P_n(10)&=P_n(0)-P_n(00), \nonumber \\
P_n(11)&=1-2\, P_n(0)+P_n(00), \nonumber \\
P_n(001)&=P_n(00)-P_n(000), \nonumber \\
P_n(011)&=P_n(0)-P_n(00)-P_n(010), \nonumber \\
P_n(100)&=P_n(00)-P_n(000), \nonumber \\
P_n(101)&=P_n(0)-2\, P_n(00)+P_n(000), \nonumber \\
P_n(110)&=P_n(0)-P_n(00)-P_n(010), \nonumber \\
P_n(111)&=1-3\, P_n(0)+2\, P_n(00)+P_n(010).
\end{align}
\section{Exact results: preimage sets}
We will now compute block probabilities of length up to 3 (and even beyond) using eq.~(\ref{bernoulliP}). The first thing we need
to do is to describe the structure of preimage sets $\f^{-n}(\mathbf{a})$ for some selected short blocks $\mathbf{a}$,
namely for $100$, $101$ and $010$. We will see why these three are important in the next section.
\begin{proposition}\label{prop100}
$\mathbf{f}^{-n}(100)$ has the form $$\underbrace{*...*}_{n}100\underbrace{*...*}_{2n}$$ and $\mathbf{f}^{-n}(00100)$ has the form $$\underbrace{*...*}_{n}00100\underbrace{*...*}_{2n},$$ where $*$ is an arbitrary element in $\{0,1\}$.
\end{proposition}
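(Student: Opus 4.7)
The plan is to reduce the proposition to a single-step ``blocking'' property for the words $100$ and $00100$, and then iterate. Writing $F$ for the global mapping, one has $\mathbf{f}^n(c)_j = F^n(s)_{j+n}$ for any configuration $s$ extending $c$; this follows by induction, with the base case $\mathbf{f}(c)_j = F(s)_{j+1}$ coming directly from the definitions and the shift by $r_l=1$. It therefore suffices to prove, for every $s \in \calS$ and every index $i$, the two local equivalences: \emph{(a)} $F(s)_i F(s)_{i+1} F(s)_{i+2} = 100$ if and only if $s_i s_{i+1} s_{i+2} = 100$; and \emph{(b)} $F(s)_i F(s)_{i+1} F(s)_{i+2} F(s)_{i+3} F(s)_{i+4} = 00100$ if and only if $s_i s_{i+1} s_{i+2} s_{i+3} s_{i+4} = 00100$. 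Iterating (a) then gives $F^n(s)_i F^n(s)_{i+1} F^n(s)_{i+2} = 100$ iff $s_i s_{i+1} s_{i+2} = 100$ for all $n$, and combined with the index identification this yields the claim about $\mathbf{f}^{-n}(100)$; (b) handles $\mathbf{f}^{-n}(00100)$ analogously.

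The ``if'' direction of (a) is a direct verification from \eqref{defrule}: $f(x,1,0,0)$, $f(1,0,0,y)$, and $f(0,0,y,z)$ evaluate to $1, 0, 0$ respectively for all $x, y, z \in \A$, which is precisely the ``particle $100$ is stationary'' observation made right after \eqref{defrule}. The converse is a finite case analysis on the triple $(s_i, s_{i+1}, s_{i+2})$: the constraint $f(s_{i-1}, s_i, s_{i+1}, s_{i+2}) = 1$ restricts it to $\{100, 110, 111, 010, 011\}$; the constraint $f(s_i, s_{i+1}, s_{i+2}, s_{i+3}) = 0$ rules out $111$ and $011$ outright; and the constraint $f(s_{i+1}, s_{i+2}, s_{i+3}, s_{i+4}) = 0$ eliminates the remaining $110$ and $010$, since in both cases $F(s)_{i+1}=0$ forces $s_{i+1} s_{i+2} s_{i+3} = 101$, on which $f$ equals $1$ regardless of $s_{i+4}$. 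Claim (b) then reduces to (a) applied to the trailing $100$, plus the observation that $F(s)_{i+1} = f(s_i, s_{i+1}, 1, 0) = 0$ forces $s_i = s_{i+1} = 0$ (since $f(0110) = f(1010) = f(1110) = 1$); the remaining constraint $F(s)_i = 0$ is automatic, as $f(s_{i-1}, 0, 0, 1) = 0$ for both values of $s_{i-1}$.

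The main obstacle is the converse in claim (a), which is the heart of the blocking property. It is a finite verification, but the three $f$-constraints must cooperate: the first alone still leaves five admissible triples, and each requires one or two of the remaining constraints to eliminate. Once (a) is in place, the rest — the forward direction, the analogous treatment for $00100$, and the induction on $n$ — is routine bookkeeping with indices.
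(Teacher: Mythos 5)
Your proof is correct, and it takes a genuinely different route from the paper's. The paper argues directly on preimage sets by induction on $n$: it verifies by exhaustive listing that $\mathbf{f}^{-1}(100)=\{*100**\}$ and then composes preimages, concluding that $\mathbf{f}^{-(n+1)}(100)=\mathbf{f}^{-1}(\mathbf{f}^{-n}(100))$ must have the claimed shape because each application of $\mathbf{f}^{-1}$ adds one arbitrary symbol on the left and two on the right. You instead prove a pointwise forward statement about the global map --- that $F(s)$ carries the word $100$ (resp.\ $00100$) at position $i$ if and only if $s$ does --- and then translate back to blocks via the index shift $\mathbf{f}^n(c)_j=F^n(s)_{j+n}$. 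The two approaches cost about the same amount of finite checking (your case analysis for the converse of (a) is essentially the paper's enumeration of $\mathbf{f}^{-1}(100)$ in disguise), but your formulation makes the ``blocking word'' character of $100$ explicit as an exact single-step equivalence, which the paper only states as a remark after the proposition; it also sidesteps a point the paper glosses over, namely that taking $\mathbf{f}^{-1}$ of a set of long blocks of the form $*^{n}100*^{2n}$ requires knowing that the flanking $*$'s impose no constraints on where $100$ can sit in the preimage, which your if-and-only-if at the configuration level delivers for free. All the individual table lookups in your argument check out against eq.~(\ref{defrule}).
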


Proof of the first part of the above can be done by induction.
Taking $n=1$,  we notice,
by direct verification, that preimages of $100$ are
$$
\{010000,
010001,
010010,
010011,
110000,
110001,
110010, 
110011\}=\{*100**\}.
$$
Thus the proposition is indeed valid for $n=1$.

For the induction step, assume that the expression for $\mathbf{f}^{-n}(100)$
is valid for a given $n$. This means that
$$\mathbf{f}^{-(n+1)}(100)=
\mathbf{f}^{-1} \left(\mathbf{f}^{-n}(100)\right)
=\mathbf{f}^{-1} \left(
\underbrace{*...*}_{n}100\underbrace{*...*}_{2n}
\right).
$$
Becasue the preimage of 
$100$ is $*100**$, and because
$\mathbf{f}^{-(n+1)}(100)$
must be longer that
$\mathbf{f}^{-n}(100)$ by three
symbols, we conclude that
$\mathbf{f}^{-(n+1)}(100)$ has the form
$$\underbrace{*...*}_{n+1}100\underbrace{*...*}_{2n+2}.$$
This verifies the induction step, proving the first part of the proposition.
Proof of the second part is similar.

Note that Proposition 1 implies that every block 100 stays in the same place during iterations of the rule.
This means that no information can pass through the block 100, neither from the left of from the right. We call
such a block the \emph{blocking word} \cite{Kurka1997}.

\begin{proposition} \label{prop101}
The set of $n$-step preimages of 101 under the rule 56528
is given by
$$\mathbf{f}^{-n}(101)= \bigcup_{i=0}^{n} A_{n,i},$$ where each $A_{n,i}$ is the set of all binary strings of length $3+3n$ of the form
$$\underbrace{*...*}_{i}1a_{i+2}...a_{2n}101\underbrace{*...*}_{n},$$
such that the block $a_{i+2}...a_{2n}$ has exactly $n-i$ zeros  and that it does not include any 00.
\end{proposition}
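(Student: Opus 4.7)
I would prove this by induction on $n$. The base case $n=0$ is immediate: $A_{0,0}=\{101\}$ (the middle and both star segments are empty), matching $\f^{0}(101)=101$. Assume the identity for $n$. Since $\f^{-(n+1)}(101)=\f^{-1}\bigl(\bigcup_{i=0}^{n} A_{n,i}\bigr)$, it suffices to establish both inclusions between this set and $\bigcup_{j=0}^{n+1} A_{n+1,j}$.

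For the forward inclusion, I would take a generic $c\in A_{n+1,j}$ of the form $\underbrace{*...*}_{j}\,1\,a_{j+2}\cdots a_{2n+2}\,101\,\underbrace{*...*}_{n+1}$ and compute $\f(c)$ position by position using the rule table in~(\ref{defrule}). The key observation is that the embedded $101$ at positions $2(n+1),2(n+1)+1,2(n+1)+2$ of $c$, together with the no-00 constraint on the middle, produces a $101$ at positions $2n,2n+1,2n+2$ of $\f(c)$: indeed $f(c_{2n},c_{2n+1},1,0)=1$ for all $(c_{2n},c_{2n+1})\neq(0,0)$ (and the pair $(0,0)$ is ruled out either by the no-00 condition on the middle or by the presence of the anchor $1$), while $f(c_{2n+1},1,0,1)=0$ and $f(1,0,1,c_{2n+5})=1$ hold for both values of the free variable. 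The remainder of the computation shows that the no-00 middle of $c$ transforms into a shorter no-00 middle with one fewer zero, and that the anchor index of $\f(c)$ becomes $j$ or $j-1$ depending on whether $a_{j+2}$ is $0$ or $1$ (or the middle is empty).

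For the reverse inclusion, I would take an arbitrary preimage $c$ of some $b\in A_{n,i}$ --- that is, $c$ satisfies $f(c_k,c_{k+1},c_{k+2},c_{k+3})=b_k$ for $k=0,\ldots,3n+2$ --- and extract the claimed structure by backward propagation. The three equations at $k=2n,2n+1,2n+2$ (coming from the $101$ in $b$) admit, by a finite case check of the rule table, the unique solution $c_{2n+2}c_{2n+3}c_{2n+4}=101$. The equations at $k=i+1,\ldots,2n-1$ then back-propagate to impose a no-00 middle on $c$ with one additional zero, while the anchor of $c$ is located at some position $j\in\{0,\ldots,n+1\}$ determined by the leftmost forced $1$ in the propagation. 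The remaining equations at star positions of $b$ impose no constraint on the corresponding segments of $c$. The main obstacle is the rigorous handling of this backward propagation, verifying window by window that the admissible preimages are consistent with a no-00 structure having the correct zero count and anchor position. A cardinality consistency check is reassuring: since $|A_{n,i}|=2^{n+i}\binom{n}{i}$, we have $\bigl|\bigcup_{i=0}^{n} A_{n,i}\bigr|=2^n\sum_{i=0}^{n}2^i\binom{n}{i}=2^n\cdot 3^n=6^n$, which matches the expected size of $\f^{-n}(101)$ and transitions correctly to $6^{n+1}$ after one more backward step.
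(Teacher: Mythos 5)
Your overall strategy---induction on $n$ with a double inclusion between $\f^{-1}\bigl(\bigcup_i A_{n,i}\bigr)$ and $\bigcup_j A_{n+1,j}$, verified window by window against the rule table---is genuinely different from the paper's argument, which works forward in time using the particle interpretation (isolated zeros are left-moving particles, clusters of zeros are stationary absorbers) and is explicitly informal. Your forward inclusion is essentially sound: the three local checks $f(c_{2n},c_{2n+1},1,0)=1$ for $(c_{2n},c_{2n+1})\neq(0,0)$, $f(c_{2n+1},1,0,1)=0$, and $f(1,0,1,\cdot)=1$ are correct, and the dichotomy ``anchor index becomes $j$ when $a_{j+2}=0$, becomes $j-1$ when $a_{j+2}=1$'' is right. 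Two bookkeeping slips, though: the phrase ``one fewer zero'' is only correct in the $a_{j+2}=0$ branch (in the $j-1$ branch the new middle keeps all $n+1-j$ zeros, consistent with $n-(j-1)=n+1-j$), and your base case $n=0$ does not actually match the stated formula, which for $n=0,i=0$ produces the length-$4$ string $1101$ rather than $101$; you should anchor the induction at $n=1$, where the formula does give exactly the six preimages of $101$.

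The genuine gap is the reverse inclusion, which you correctly identify as ``the main obstacle'' but do not carry out---and it is where the entire content of the proposition lives, since the forward inclusion alone only shows $\bigcup_j A_{n+1,j}\subseteq \f^{-(n+1)}(101)$. The backward propagation through the middle block is not a routine verification: one must show that every solution of the windowed equations acquires an anchor $1$ at some admissible position $j\le n+1$ with the exact zero count $n+1-j$, and rule out stray preimages whose left part is not of the claimed form. Moreover, the cardinality check you offer as reassurance cannot substitute for this step. Rule 56528 is not balanced (for instance $|\f^{-1}(00)|=9$ while $|\f^{-1}(010)|=5$, and $|\f^{-1}(101)|=6$, not $2^3$), so there is no a priori ``expected size'' $6^n$ for $\f^{-n}(101)$, and the claim that the count ``transitions correctly to $6^{n+1}$'' silently assumes every string in $\bigcup_i A_{n,i}$ has exactly six preimages---a uniformity that fails for general strings under this rule and would itself require the very structural analysis you are trying to avoid. (The identity $\sum_{i=0}^{n}2^{n+i}\binom{n}{i}=6^n$ and the pairwise disjointness of the $A_{n,i}$ are fine, but they only count the right-hand side.) To close the argument you must either complete the backward propagation honestly, or replace it with the paper's forward-dynamical observation that an isolated zero survives $n$ steps if and only if it is preceded by a buffer free of $00$ with the prescribed number of zeros.
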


\begin{proof}
In order to to avoid tedious details we will prove the above proposition in somewhat informal way,
although every step of the following reasoning  could easily be formalized.

Figure~\ref{pat1} shows an example of a spatiotemporal pattern produced by rule 56528. One can think of the dynamics of this rule as ``movement'' of zeros in the background of ones. Isolated zeros move to the left
one cell per time step, while clusters of two or more zeros keep their left boundary in place.
When the isolated zero collides with the cluster of zeros, the cluster ``absorbs'' the isolated zero
and extends its right boundary by one (that is, it grows by one unit to the right).

As a consequence of this, the only way to obtain 101 (or isolated zero)
at time step $n+1$ is to have it at time step $n$ located at the position one unit to the right compared to step $n$ (recall that isolated zeros travel to the left), and to make sure that this zero does not get
absorbed by the nearest cluster of zeros on the left.

By induction, the only way to obtain 101 (or isolated zero)
after $n$ iterations is to have 101 in the initial string located at at the position $n$ unit to the right compared to its position after $n$ iterations, and preceded
by sufficiently long ``buffer'' which does not contain double zeros.
What is on the right of $101$ in the initial string does not matter,
and what precedes the buffer does not matter either, providing that it is sufficiently long.
This means that $\mathbf{f}^{-n}(101)$ must be of the form 
$$\underbrace{*...*}_{i}a_{i+1}...a_{2n}101\underbrace{*...*}_{n},$$
where $a_{i+1}...a_{2n}$ is the aforementioned ``buffer'' containing no double zeros. 
This buffer has length $2n-i$, where $i$ can vary from $0$ to $n$. 

Suppose now that the buffer has only ones, no zeros. Its length can then be just $n$, as show in the example in Figure \ref{stringiterates}a for $n=3$.
Set of all strings  with only ones in the buffer will be, therefore, of the form
$$\underbrace{*...*}_{n}a_{n+1} a_{n+2}...a_{2n}101\underbrace{*...*}_{n},$$
where all symbols $a_i$  for $i=n+1,\ldots 2n$ take value 1. We will
call this set $A_{n,n}$.

If the buffer has exactly one zero, it must be by one 
unit longer than before, such as examples in 
Figure \ref{stringiterates}b or \ref{stringiterates}c.
Set of all strings  with single 0  in the buffer will be, therefore, of the form
$$\underbrace{*...*}_{n-1}a_{n} a_{n+1}...a_{2n}101\underbrace{*...*}_{n},$$
where $a_{n} a_{n+1}...a_{2n}$ includes  only one zero and starts with $a_{n}=1$. We will
call this set $A_{n,n-1}$.

This pattern of construction of sets $A_{n,i}$   continues with decreasing $i$, each consecutive $A_{n,i}$ containing preimages with buffer with exactly $n-i$ zeros and $a_{i+1}=1$. The last one, $A_{n,0}$, will be the set of strings with the buffer containing exactly $n$ zeros,
such as the example in Figure \ref{stringiterates}d.

Once can easily conclude, therefore, that the set of preimages of 101 will be  the union of sets $A_{n,i}$,
each containing strings of the form
$$\underbrace{*...*}_{i}a_{i+1}a_{i+2}...a_{2n}101\underbrace{*...*}_{n},$$
such that the block $a_{i+1}...a_{2n}$ has exactly $n-i$ zeros, starts with $a_{i+1}=1$,   and  does not include any 00, exactly  as claimed. 
\end{proof}

\begin{figure}
\begin{center}
\includegraphics[width=12cm]{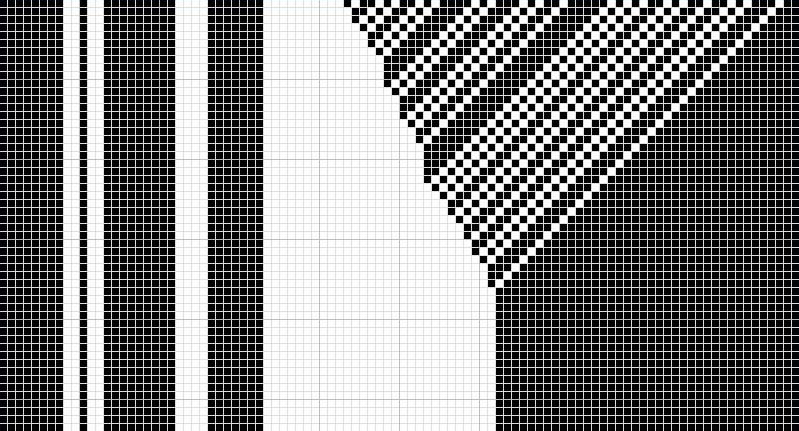} 
\end{center}
\caption{Spatiotemporal pattern generated by rule 56528, using lattice of 100 sites with
periodic boundaries. Black squares represent 1s and white squares represent 0s. Time (consecutive iterations) proceeds downwards.}\label{pat1} 
\end{figure} 

\begin{figure}
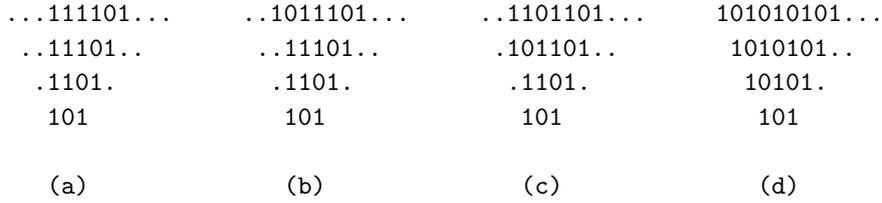

\begin{center}
\begin{minipage}{3cm}
\begin{verbatim}
 ...111101...
  ..11101..
   .1101.
    101
    
    (a) 
\end{verbatim}
\end{minipage}
\begin{minipage}{3cm}
\begin{verbatim}
 ..1011101...
  ..11101..
   .1101.
    101
    
    (b)
\end{verbatim}
\end{minipage}
\begin{minipage}{3cm}
\begin{verbatim}
 ..1101101...
  .101101..
   .1101.
    101
    
    (c)
\end{verbatim}
\end{minipage}
\begin{minipage}{3cm}
\begin{verbatim}
 101010101...
  1010101..
   10101.
    101
    
    (d)
\end{verbatim}
\end{minipage}
\end{center}

\caption{Examples of preimages of 101. Top line
in each diagram represents string of length $3\cdot 3+3=12$, followed by its three consecutive images under  $\mathbf{f}$. Irrelevant symbols are represented by dots.}\label{stringiterates} 
\end{figure}

\begin{proposition} \label{prop010}
The set of $n$-step preimages of 010 under the rule 56528 is the union of the  three sets,
$\mathbf{f}^{-n}(010)=A \cup B \cup C$, defined as follows.
\begin{enumerate}
\item[(i)] $A$ is the set of strings in the form $\underbrace{*...*}_{n-1}00100\underbrace{*...*}_{2n-1}$.
\item[(ii)] $B=\displaystyle\bigcup_{i=0}^{n-1}B_i$, where $B_i$ are the sets of all binary strings of length $3+3n$ of the form
$$\underbrace{*...*}_{i+1}1a_{i+3}...a_{2n-1}10101\underbrace{*...*}_{n-1},$$ and where $a_{i+3}...a_{2n-1}$ contains $n-1-i$ zeros and does not include any 00.
\item[(iii)]  $\displaystyle C= \bigcup_{i=0}^{n-1}C_i$, where $C_i$ are the sets of all binary strings of length $3+3n$ of the form
$$\underbrace{*...*}_{i}001a_{i+4}...a_{2n}1101\underbrace{*...*}_{n-1},$$ and where $a_{i+4}...a_{2n}$ contains $n-1-i$ zeros and does not include any 00.
\end{enumerate}
\end{proposition}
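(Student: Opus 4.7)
My plan is to prove Proposition 3 by induction on $n$, extending the particle-dynamics argument used in the proof of Proposition 2. Denote by $A^{(n)}$, $B^{(n)}$, $C^{(n)}$ the three families in the statement at level $n$. They correspond to three qualitatively different histories of the central $1$ in the target block $010$: in case $A$ the $1$ is pinned in place by a blocking word $100$ embedded in the surrounding $00100$ pattern, while in cases $B$ and $C$ the flanking zeros arise from a chain of isolated zeros that has been drifting leftward, distinguished by whether the far-left end of that chain sits among wildcards (case $B$) or against a $001$ cluster boundary (case $C$).

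The base case $n=1$ follows from the explicit enumeration $\mathbf{f}^{-1}(010) = \{001000, 001001, 001101, 010101, 110101\}$ already recorded in the excerpt, which splits cleanly as $A^{(1)} = \{001000, 001001\}$ (form $00100*$), $B^{(1)} = \{010101, 110101\}$ (form $*10101$), and $C^{(1)} = \{001101\}$, consistent with the statement under the natural boundary convention in which an empty buffer collapses the adjacent literal characters.

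For the inductive step I would write $\mathbf{f}^{-(n+1)}(010) = \mathbf{f}^{-1}(A^{(n)}) \cup \mathbf{f}^{-1}(B^{(n)}) \cup \mathbf{f}^{-1}(C^{(n)})$ and identify each preimage. Proposition 1 gives $\mathbf{f}^{-1}(A^{(n)}) = A^{(n+1)}$ immediately, since the $00100$ block lifts to $*00100**$ under one step and the wildcard margins extend by one on the left and two on the right. For $B^{(n)}$ and $C^{(n)}$, the key claim is that each fixed suffix ($10101$ for $B$, $1101$ for $C$) lifts by shifting one cell to the right in the preimage and by lengthening the no-$00$ buffer by exactly one more isolated zero, reflecting one additional backward step of the leftward motion of the leading isolated zero. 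The new leftmost character of the extended buffer, constrained by the local rule, can only be a $1$ (possibly followed by $0$ or $1$), so the preimage stays within the same family with an appropriately shifted and split index, yielding $\mathbf{f}^{-1}(B^{(n)}) = B^{(n+1)}$ and $\mathbf{f}^{-1}(C^{(n)}) = C^{(n+1)}$.

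The main obstacle is the combinatorial bookkeeping at this last step: one must verify that the no-$00$ property of the buffer is genuinely preserved under $\mathbf{f}^{-1}$ (so that the preimage does not leak outside $A \cup B \cup C$), that the number of zeros in the buffer grows by exactly one, and that each index $i$ in $B^{(n)}$ (respectively $C^{(n)}$) splits correctly into contributions to the indices $i$ and $i+1$ in $B^{(n+1)}$ (respectively $C^{(n+1)}$), so that the level-$(n+1)$ unions are covered exactly once without duplication. Concretely, this requires an explicit one-step preimage table for $10101$ and $1101$, paralleling the calculation for $101$ implicit in the proof of Proposition 2 but with several more cases; the reasoning is elementary but delicate, and the index-shift tracking is the most subtle aspect.
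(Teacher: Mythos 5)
Your overall strategy --- induction on $n$ with the three families $A$, $B$, $C$ each closed under one more application of $\mathbf{f}^{-1}$ --- is a legitimate reorganization of the argument, and your base case and the identity $\mathbf{f}^{-(n+1)}(010)=\mathbf{f}^{-1}\bigl(\mathbf{f}^{-n}(010)\bigr)$ are fine. But the proposal stops exactly where the proposition's content begins. The claims $\mathbf{f}^{-1}(B^{(n)})=B^{(n+1)}$ and $\mathbf{f}^{-1}(C^{(n)})=C^{(n+1)}$, together with the assertion that each $B_i^{(n)}$ contributes precisely to $B_i^{(n+1)}$ (buffer gains one zero) and $B_{i+1}^{(n+1)}$ (buffer unchanged, one more wildcard), are stated as ``the key claim'' and then deferred to an unperformed case analysis. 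You yourself label this the main obstacle; without it the induction does not close, and nothing in the proposal rules out a preimage of a $B$-type or $C$-type string escaping the three families or landing in the wrong one. So as written this is a proof outline with the central verification missing, not a proof.

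The paper avoids this step-by-step bookkeeping entirely by peeling off the \emph{single} step adjacent to the target: it computes $\mathbf{f}^{-1}(010)=\{00100*\}\cup\{*10101\}\cup\{001101\}$ and then takes $(n-1)$-step preimages of each seed in one shot. For the first seed, Proposition~\ref{prop100} gives $A$ immediately; for the second, the argument of Proposition~\ref{prop101} applies verbatim to $10101$ and gives $B$; for the third, a direct particle-counting argument determines the buffer's zero count all at once --- the left boundary of the $00$ cluster must advance by $n-i-1$ cells over $n-1$ steps, so the buffer must contain exactly $n-i-1$ isolated zeros, which is simultaneously the condition for $1101$ to drift left undisturbed. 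If you want to salvage your induction, you would need the one-step preimage tables for $10101$ and $1101$ and a careful argument that the no-$00$ buffer extends by exactly one symbol class per step; but it is substantially less work to adopt the paper's decomposition, where the zero count is pinned down by a single displacement equation rather than tracked inductively.
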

\begin{proof}
By direct listing of all binary strings of length 6 and checking which of them are preimages of 010  we find that
$$\mathbf{f}^{-1}(010)=\left\{00100*\right\} \cup \left\{*10101\right\} \cup \left\{001101\right\} ,$$
therefore
$$\mathbf{f}^{-n}(010)=
\mathbf{f}^{-n-1} (\left\{00100*\right\}) \cup 
\mathbf{f}^{-n-1} (\left\{*10101\right\}) \cup 
\mathbf{f}^{-n-1} (001101) .$$
We will demonstrate that the three sets on the right hand side of the above
correspond to sets $A$, $B$, and $C$.

For the first one, by Proposition 1, the $(n-1)$-step preimages of 00100 have the form $\underbrace{*...*}_{n-1}00100\underbrace{*...*}_{2n-2}$, thus we get  the set $A$ as defined in (i).

For the second set, note that 10101 contains two substrings 101. Preimages of 10101 can be constructed
similarly as preimages of 101 in the proof of Proposition~\ref{prop101}, thus we will not repeat it here. This leads to
the set $B$ of preimage strings as described in (ii).

What remains is to  show that $C=\mathbf{f}^{-n-1} (001101)$, thus we need to construct all $n-1$-step preimages of 001101. Let us first take a look at 
Figure~\ref{pat1} again. 
We can see that blocks 1101 move to the left one cell per time step,
similarly as block 101. Moreover, recall that the left boundary of cluster of zeros
moves to the right upon absorbing 101 arriving from the right. Therefore, every preimage of 
001101 must have the form
$$*\ldots*00a_ja_{j+1}\ldots a_{j+m}1101*\ldots*,$$
where $a_ja_{j+1}\ldots a_{j+m}$ is a buffer
(with the values of $j$ and $m$
are to be determined) which ensures that 1101 is not prematurely destroyed before it arrives to its final position after $n-1$ iterations.  At the same time, this buffer  must contain enough of isolated zeros to allow the 00 on the right to grow
by just enough units so that after $n-1$ iterations the cluster of zeros
ends just before 1101 block, forming the 
desired string 001101.

Suppose now that 
the block 00 is located in the initial string at positions $i$ and $i+1$. The rightmost 0 in the cluster of zeros is therefore at position $i+1$.
After $n-1$ iterations it needs to move to position $(n-1)+2 = n+1$,
because each iteration of $\mathbf{f}$ shortens
the initial string by one cell from the left.
The boundary of the cluster of zeros must therefore
move by $(final position - initial position)=n+1-(i+2)=n-i-1$. This can happen if
the cluster of zeros absorbs exactly
$n-i-1$ zeros, so the buffer must contain exactly
$n-i-1$ zeros. As it turn out,  $n-i-1$ zeros in the buffer is also exactly the right number of zeros 
needed for the
the block $1101$ to move undisturbed
to its final position, one step to the left at each iteration (see proof of Proposition~\ref{prop101} for explanation 
why this happens).

The above leads to the conclusion
that elements of the set $C$ must have the form
$$\underbrace{*...*}_{i}001a_{i+4}...a_{2n}1101\underbrace{*...*}_{n-1},$$ 
where $a_{i+4}...a_{2n}$ contains exactly $n-1-i$ zeros and does not include 00.
The index $i$ can vary from 0 to $n-1$, thus
we obtain $\displaystyle C= \bigcup_{i=0}^{n-1}C_i$ with $C_i$ defined as in (iii).
\end{proof}

\section{Exact results: Block probabilities}
Using the results of the previous section, we can now compute the relevant block probabilities  using eq.~(\ref{bernoulliP}).
Note that the right hand side of eq.~(\ref{bernoulliP}) is a polynomial in two variables $p,q$, and we will
call it \emph{density polynomial}. We will often write the density polynomial using only
one variable $p$, by substituting  $q=1-p$. The quantity $p$ will be called the density, as it represents
the ``density'' of 1s.

Since our CA rule is number-conserving, density polynomials for 0 and 1 are obvious,
\begin{equation}\label{prob0}
 P_n(1)=p, {\,\,\,\,\,\,}  P_n(0)=1-p.
\end{equation}

The density polynomial for $100$ is easy to obtain from From Proposition \ref{prop100} and eq.~(\ref{bernoulliP}). We have
\begin{equation} \label{prob100}
P_n(100)=(p+q)^{3n}pq^2=p (1-p)^2,
\end{equation}
where, as mentioned, we use  $q=1-p$. 
Similarly, from the same Proposition 1, we have
\begin{equation} \label{prob00100}
P_n(00100)=(p+q)^{3n}pq^4=p (1-p)^4.
\end{equation}

The density polynomial for  $101$ is a bit more complicated. The following lemma will be useful.
\begin{lemma}
 The number of strings $b_1b_2\ldots b_m$ which include exactly $k$ ones and do not include any pair 00
is $\displaystyle \binom{k+1}{m-k}$.
\end{lemma}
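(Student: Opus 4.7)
The plan is a standard gap-insertion (stars-and-bars) argument. First I would observe that a string $b_1 b_2 \ldots b_m$ with exactly $k$ ones must contain exactly $m-k$ zeros, and the condition of not containing the pair $00$ is equivalent to saying that no two of those zeros are adjacent.

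Next I would fix the positions of the ones to create a skeleton. Writing down the $k$ ones in a row produces $k+1$ ``gaps'' (the slot before the first one, the $k-1$ slots between consecutive ones, and the slot after the last one). A zero can be inserted into any of these gaps, and the no-$00$ condition translates precisely to the requirement that each gap contains at most one zero. Hence building a valid string is the same as choosing which $m-k$ of the $k+1$ available gaps receive a single zero.

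The number of such choices is $\binom{k+1}{m-k}$, which proves the lemma. As a sanity check one should note that when $m-k > k+1$ there are no valid strings, and the binomial coefficient indeed vanishes with the usual convention, so the formula remains correct in the degenerate regime. Similarly, when $m-k = 0$ the formula gives $\binom{k+1}{0}=1$, corresponding to the unique all-ones string, and when $k=0$ the formula gives $\binom{1}{m}$, which equals $1$ if $m\le 1$ and $0$ otherwise, again matching the fact that $0$ and the empty string are the only zero-only strings without a $00$.

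The argument is essentially routine, so I do not anticipate any real obstacle; the only mildly subtle point is the bijective identification between valid strings and subsets of size $m-k$ of the set of $k+1$ gaps, which is what makes the binomial coefficient appear in the slightly unusual form $\binom{k+1}{m-k}$ rather than the more familiar $\binom{m-k+k}{k}$.
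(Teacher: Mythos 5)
Your proof is correct and is essentially the same counting as in the paper: the paper prepends a $1$ and decomposes the string into $k+1$ blocks of the form $1$ or $10$, choosing which $m-k$ blocks are $10$, which is the same bijection as your choice of which $m-k$ of the $k+1$ gaps receive a zero. Your explicit treatment of the degenerate cases is a nice addition but not a substantive difference.
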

\begin{proof}
Note that the number of strings $b_1b_2\ldots b_m$ is the same as the number of strings with 1 added before every one of them, i.e., the string $1b_1\dots b_m$.

Since $b_1\dots b_m$ has no pair 00, then the string $1b_1\dots b_m$ can be viewed as a combination of blocks of 10 and blocks of 1. So the number of strings is the same as the number of such combinations.

The length of the combination is the number of ones in the block $1b_1\dots b_m$ which is $k+1$, and the number of block 10 in the combination is the same as the number of zeros in the block $1b_1\dots b_m$ which is $m-k$, giving the number of combinations $\displaystyle\binom{k+1}{m-k}$. 
\end{proof}

We will now apply the above lemma to construct the density polynomial for the block $101$.  In the statement of
Proposition \ref{prop101}, 
the block $a_{i+2}\ldots a_{2n}$ has  $2n-i-1$ symbols, including $n-i$ zeros and
 $2n-i-1 - (n-i) = n-1$  ones. Such string, according to the above lemma,  is realizable in 
$$ \binom{n-1+1}{2n-i-1-(n-1)}=\binom{n}{n-i}$$
possible ways. The density polynomial corresponding to such a block is, therefore,
$$\binom{n}{n-i} p^{n-1} q^{n-i}.$$
This needs to be multiplied by $(p+q)^i p^3q (p+q)^n$, corresponding to the required  prefix and postfix in the
block $\underbrace{*...*}_{i}1a_{i+2}...a_{2n}101\underbrace{*...*}_{n}$,
and then summed over $i$ from $i=0$ to $i=n$.
In the end, we obtain
$$P_n(101)=\sum_{i=0}^n (p+q)^i p^3 q (p+q)^n \binom{n}{n-i} p^{n-1} q^{n-i}
=\sum_{i=0}^n (p+q)^{n+i} \binom{n}{n-i} p^{n+2} q^{n-i+1},$$
which, after carrying out the summation and simplifying, yields
\begin{equation} 
P_n(101)= q p^2 (p+2q)^n(p+q)^n p^n.
\end{equation}
Substituting  $q=1-p$, we obtain
\begin{equation}\label{prob101}
 P_n(101)=(1-p)  (2-p)^n p^{n+2}.
\end{equation}

Having $P_n(101)$ and $P_n(101)$ (eqs. (\ref{prob100}) and (\ref{prob101}), respectively) we can now  compute $P_n(00)$ and $P_n(000)$
using eqs. (\ref{dependentProbs}),
\begin{align} \label{dependentP}
P_n(100)&=P_n(00)-P_n(000), \nonumber \\
P_n(101)&=P_n(0)-2\, P_n(00)+P_n(000). 
\end{align}
Solving the above for $P_n(00)$ and $P_n(000)$ we obtain,
\begin{align}
 P_n(00) &= P_n(0) -P_n(100)-P_n(101),\nonumber \\
 P_n(000) &= P_n(0) -P_n(101)-2 P_n(100),
\end{align}
 and, by substituting density polynomials of eqs. (\ref{prob101}) and (\ref{prob100}),  we finally get
\begin{align} \label{prob00-000}
 P_n(00) &= 1-p -p (1-p)^2 -(1-p)  (2-p)^n p^{n+2}, \nonumber \\
 P_n(000) &= 1-p -(1-p)  (2-p)^n p^{n+2} -2 p (1-p)^2.
\end{align}

A very similar reasoning can be applied to the density polynomial of $010$, using Proposition \ref{prop010}.
Without supplying all details, we just show the calculations,
which are rather self-explanatory.
\begin{align*}
P_n(010) &= pq^4(p+q)^{3n-2}+\sum_{i=0}^{n-1} (p+q)^{i+n} p^{2+n} q^{1+n-i} \binom{n-1}{n-1-i}\\
&+\sum_{i=0}^{n-1}(p+q)^{i+n-1}p^{2+n}q^{2+n-i}\binom{n-1}{n-1-i}\\
&= pq^4(p+q)^{3n-2}+\frac{p^2q^2(p(p+q)(p+2q))^n}{p+q}
\end{align*}
By substituting  $q=1-p$, we finally obtain
\begin{equation}\label{prob010}
P_n(010)=p(1-p)^4+p^2(1-p)^2(p(2-p))^n.
\end{equation}

Equations (\ref{prob0}), (\ref{prob00-000}), and (\ref{prob010})  provide expressions for block probabilities
$P_n(0)$,  $P_n(00)$ $P_n(000)$ and $P_n(010)$. Let us summarize them here:
\begin{align} \label{exactsol}
 P_n(0)&=1-p , \nonumber \\
 P_n(00) &= 1-p -p (1-p)^2 -(1-p)  (2-p)^n p^{n+2}, \nonumber \\
 P_n(000) &= 1-p -(1-p)  (2-p)^n p^{n+2} -2 p (1-p)^2, \nonumber \\
 P_n(010)&=p(1-p)^4+p^2(1-p)^2(p(2-p))^n.
\end{align}
If we take take the limit of $n \to \infty$ in the above, we obtain the ``steady state'' values,
\begin{align} \label{exactsolasy}
 P_\infty(0)&=1-p , \nonumber \\
 P_\infty(00) &= 1-p -p (1-p)^2,  \nonumber \\
 P_\infty(000) &= 1-p -2 p (1-p)^2, \nonumber \\
 P_\infty(010)&=p(1-p)^4.
\end{align}

\section{Local structure approximation}
We will now construct recurrence relations which block probabilities must satisfy. Since
$P_n(0)$,  $P_n(00)$ $P_n(000)$ and $P_n(010)$ can be used to express all remaining block of length up to 3, and
since $P_n(0)$ remains constant, we need to consider only blocks $00$, $000$, and $010$.
Preimages of these blocks,
obtained by direct computation, 
 are
\begin{align*}
\f^{-n}(0)&=\{
0000,
0001,
0010,
0011,
0101,
1000,
1001,
1101\}, \\
\f^{-n}(00)&=\{
00000,
00001,
00010,
00011,
00101,
10000,
10001,
10010,
10011\}, \\
\f^{-n}(000)&=\{
000000,
000001,
000010,
000011,
000101,
100000, \\
&100001,
100010,
100011,
100101\}, \\
\f^{-n}(010)&=\{
001000,
001001,
001101,
010101,
110101\}.
\end{align*}
The above immediately yields the desired recurrence relations, 
\begin{align*}
P_{n+1}(0)&=
P_{n}(0000)+
P_{n}(0001)+
P_{n}(0010)+
P_{n}(0011)+
P_{n}(0101)+
P_{n}(1000)+
P_{n}(1001)+
P_{n}(1101)
 \nonumber\\      
P_{n+1}(00)&=
P_{n}(00000)+
P_{n}(00001)+
P_{n}(00010)+
P_{n}(00011)+
P_{n}(00101)+
P_{n}(10000)
\nonumber\\ 
&+P_{n}(10001)+
P_{n}(10010)+
P_{n}(10011),\nonumber \\ 
P_{n+1}(000)&=
P_{n}(000000)+
P_{n}(000001)+
P_{n}(000010)+
P_{n}(000011)+
P_{n}(000101)+
\nonumber \\ 
&+P_{n}(100000)+P_{n}(100001)+
P_{n}(100010)+
P_{n}(100011)+
P_{n}(100101), \nonumber \\ 
P_{n+1}(010)&=
P_{n}(001000)+
P_{n}(001001)+
P_{n}(001101)+
P_{n}(010101)+
P_{n}(110101).
\end{align*}
Note that in the above we have blocks of length 6 on the right hand side. Similarly as in eq. (\ref{dependentProbs}),
we can express all block probabilities of length of up to 6 by only 32 independent probabilities,  
\begin{gather*}
\{P_n(0), P_n(00), P_n(000), P_n(010), P_n(0000), P_n(0010), P_n(0100), P_n(0110),\\ 
P_n(00000), P_n(00010), P_n(00100), P_n(00110), P_n(01000), P_n(01010), P_n(01100), P_n(01110),\\ 
P_n(000000), P_n(000010), P_n(000100), P_n(000110), P_n(001000), P_n(001010), P_n(001100), P_n(001110),\\ 
P_n(010000), P_n(010010), P_n(010100), P_n(010110), P_n(011000), P_n(011010), P_n(011100), P_n(011110)\}
\end{gather*}
Equations similar to eq. (\ref{dependentProbs}) are then obtained, although because of their length we
omit them here. Using these equations, the recurrence equations for $P_n(00)$ $P_n(000)$ and $P_n(010)$ become
\begin{align} \label{mastereq}
\underline{P_{n+1}(0)}&=\underline{P_n(0)}, \nonumber \\
P_{n+1}(00)&=P_n(00)-\underline{P_n(00100)}+\dashuline{P_n(0010)},\nonumber  \\
P_{n+1}(000)&=P_n(000)-\underline{P_n(00100)}+\dashuline{P_n(0010)},\nonumber \\
P_{n+1}(010)&=2\underline{P_n(00100)}-\dashuline{P_n(0100)}+P_n(010)-\dashuline{P_n(0010)}+\dashuline{P_n(00110)}-\dashuline{P_n(001100)}.
\end{align} 
In the above,  in addition to variables $P_n(0)$, $P_n(00)$, $P_n(000)$ and $P_n(010)$, we have
probabilities which are constant (underlined, by the virtue of eqs. (\ref{prob0}) and (\ref{prob00100})) as well as 
probabilities of longer blocks which do not appear on the left hand side (dashed underline).
The block probabilities which are underlined can obviously be replaced by their respective constant values, while
the others (underlined by the dashed line) can be approximated by probabilities of shorter blocks, 
using the so-called Bayesian extension \cite{gutowitz87a,paper50},
\begin{equation}\label{bayes}
  P(b_1b_2\ldots b_{k+2}) \approx \begin{cases}
                         P(b_1)P(b_2)P(b_3)    & \text{if $k=1$},\\[0.4em]
                      \displaystyle     \frac{P(b_1 \ldots b_{k}) P(b_2 \ldots b_{k+1}) P(b_3 \ldots b_{k+2}) }{P(b_2 \ldots b_{k}) P(b_3\ldots b_{k+1}) }
  & \text{if $k>1$},
                            \end{cases}
\end{equation}
where we assume that the denominator is positive. If the denominator is zero, then
we take  $ P(b_1b_2\ldots b_{k+2}) =0$. We thus obtain
\begin{align*}
 P_n(0010) &\approx \frac{P_n(001)P_n(010) }{P_n(01)} = \frac{(P_n(00)-P_n(000))P_n(010) }{P_n(0)-P_n(00)} ,\\
 P_n(0100) &\approx \frac{P_n(010)P_n(100)}{P_n(10)} =\frac{P_n(010)(P_n(00)-P_n(000))}{P_n(0)-P_n(00)}, \\
 P_n(00110) &\approx \frac{P_n(001) P_n(011) P_n(110)}{P_n(01) P_n(11)} =
                     \frac{(P_n(00)-P_n(000)) (P_n(0)-P_n(00)-P_n(010))^2}{(P_n(0)-P_n(00)) (1-2\, P_n(0)+P_n(00))} ,   \\
 P_n(001100) &\approx \frac{P_n(001) P_n(011) P_n(110) P_n(100)}{P_n(01) P_n(11) P_n(10)} =
                         \frac{(P_n(00)-P_n(000))^2 (P_n(0)-P_n(00)-P_n(010))^2}{(P_n(0)-P_n(00))^2 (1-2\, P_n(0)+P_n(00))},
\end{align*}
where we used eqs. (\ref{dependentProbs}).
After using the above approximations, and changing variables to  $P_n(0)=1-p$, $P_n(00)=x_n$, $P_n(000)=y_n$, $P_n(010)=z_n$,
the equations (\ref{mastereq}) become
\begin{align} \label{xyzeq}
x_{n+1}&=x_n-p(1-p)^4+\frac{(x_n-y_n)z_n}{1-x_n-p} \nonumber \\
y_{n+1}&=y_n-p(1-p)^4+\frac{(x_n-y_n)z_n}{1-x_n-p} \nonumber \\
z_{n+1}&=2p(1-p)^4-\frac{2(x_{n}-y_{n})z_{n}}{1-x_{n}-p}+z_{n}+\frac{(x_{n}-y_{n})(1-p-x_{n}-z_{n})^2}{(1-x_{n}-p)(x_{n}-1+2p)} \nonumber \\
&-\frac{(x_{n}-y_{n})^2(1-p-x_{n}-z_{n})^2}{(1-x_{n}-p)^2(x_{n}-1+2p)}.
\end{align}
We will call these \emph{local structure equations}. Note that we omitted the first equation of eqs. (\ref{mastereq}),
as it merely reflects the fact that $P_n(0)$ is constant.

The claim of local structure theory  is that eqs. (\ref{xyzeq}) well approximate the behaviour of the actual block probabilities, that is, when equations (\ref{xyzeq}) are iterated, the resulting values of $x_n$, $y_n$ and $z_n$
approximate values of $P_n(00)$, $P_n(000)$, and $P_n(010)$ given by eqs. (\ref{exactsol}). 

In order to verify this claim, we iterated eqs. (\ref{xyzeq}) numerically, starting with initial conditions
$x_0=(1-p)^2$, $y_0=(1-p)^3$, and $z_0=p(1-p)^2$. We continued iterations
until conditions $|x_n-x_{n-1}|<10^{-10}$, $|y_n-y_{n-1}|<10^{-10}$ and $|z_n-z_{n-1}|<10^{-10}$ were simultaneously met.
The resulting values, to be called $x_\infty$, $y_\infty$ and $z_\infty$, were recorded, and this was repeated for
100 equally-spaced values of $p\in [0,1]$. 

\begin{figure} 
(a)\includegraphics[width=11cm]{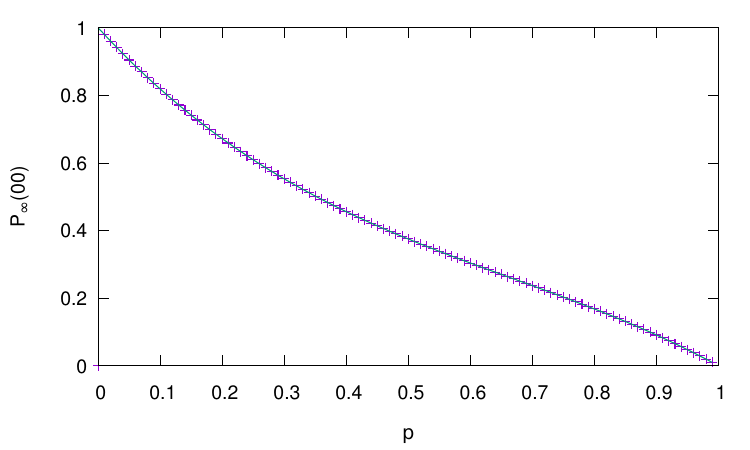}
(b)\includegraphics[width=11cm]{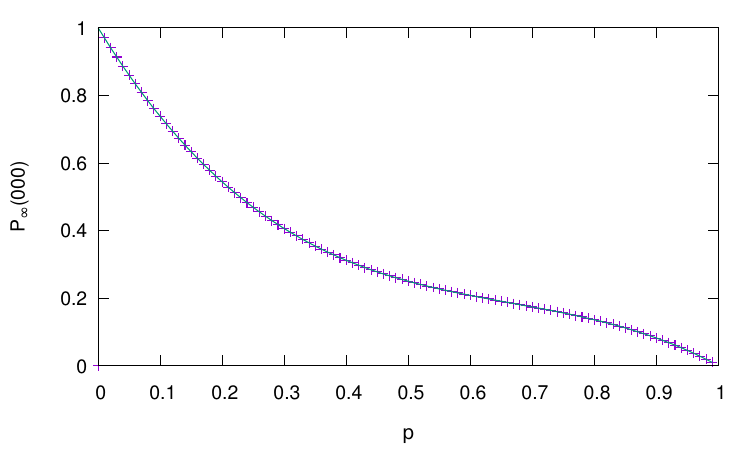}
(c)\includegraphics[width=11cm]{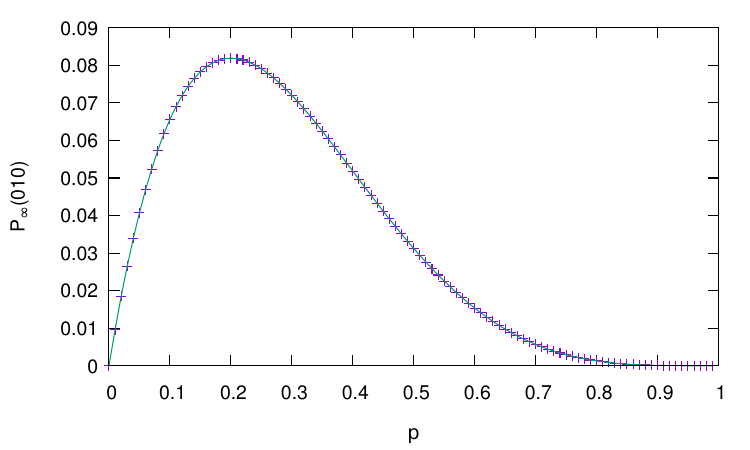}
\caption{Comparison of the exact values of $P_{\infty}(00)$,
$P_{\infty}(000)$ and $P_{\infty}(010)$ (denoted by ``$+$'') with values of $x_\infty$, $y_\infty$, $z_\infty$
obtained by iterating eqs. (\ref{xyzeq}) numerically (denoted by continuous line). 
}\label{fig1} 
\end{figure}
Figure \ref{fig1} show comparison of values of $x_\infty$, $y_\infty$ and $z_\infty$ obtained this way  with exact
values of $P_{\infty}(00)$, $P_{\infty}(000)$ and $P_{\infty}(010)$, as given by eqs. (\ref{exactsolasy}). As we can see, the agreement is excellent, suggesting that
the local structure correctly predicts the asymptotic ($n \to \infty$) values of block probabilities for blocks of length 
up to 3. We will prove the following result.
\begin{proposition}
Let $p \in [0,1]$.  When eqs. (\ref{xyzeq}) are iterated starting with initial conditions
$x_0=(1-p)^2$, $y_0=(1-p)^3$, and $z_0=p(1-p)^2$, then
$(x_n, y_n, z_n) \to (x^\star, y^\star, z^\star)$, where
\begin{align} 
 x^\star &= 1-p -p (1-p)^2,  \nonumber \\
 y^\star &= 1-p -2 p (1-p)^2, \nonumber \\
 z^\star&=p(1-p)^4.
\end{align}
\end{proposition}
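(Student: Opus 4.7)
My plan has four parts. First, I would extract a conserved quantity. Subtracting the $y$-update from the $x$-update in (\ref{xyzeq}) cancels all nonlinear terms and yields $x_{n+1}-y_{n+1}=x_n-y_n$, so with the Bernoulli initial condition $x_n-y_n=(1-p)^2-(1-p)^3=p(1-p)^2=:d$ for every $n$. A direct check also shows $x^\star-y^\star=d$, so it suffices to prove $x_n\to x^\star$ and $z_n\to z^\star$; the convergence of $y_n$ to $y^\star$ then follows automatically from the conservation law.

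Second, I would substitute $y_n=x_n-d$ throughout (\ref{xyzeq}) to reduce the problem to a two-dimensional recurrence in $(x_n,z_n)$, and change variables to $u_n=1-p-x_n$ and $\phi_n=z_n/u_n$. The fixed point becomes $u^\star=d$, $\phi^\star=(1-p)^2$, the initial condition becomes $u_0=p(1-p)$, $\phi_0=1-p$ (both strictly above the fixed-point values for $p\in(0,1)$), and the $x$-equation simplifies to the very clean one-line recurrence
\begin{equation*}
u_{n+1} \;=\; u_n + d(\phi^\star - \phi_n).
\end{equation*}
Dividing the $z$-update through by $u_{n+1}$ yields an explicit, if messy, rational recurrence $\phi_{n+1}=G(u_n,\phi_n)$.

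The crux of the proof is then to establish, by induction on $n$, that the rectangle $R=\{(u,\phi):u^\star\le u\le u_0,\ \phi^\star\le\phi\le\phi_0\}$ is forward-invariant and that both coordinates are nonincreasing on $R$. Monotonicity of $u_n$ follows immediately from the displayed $u$-update as soon as $\phi_n\ge\phi^\star$, but the corresponding statement for $\phi_n$ is the main technical obstacle: one must expand $\phi_{n+1}-\phi^\star$ and $\phi_n-\phi_{n+1}$ as rational functions of $u_n-d$ and $\phi_n-\phi^\star$ and verify that their numerators stay nonnegative on $R$. The nuisance is the cubic-in-the-errors contribution from the Bayesian-extension term $-d^2(1-p-x_n-z_n)^2(x_n-y_n)^2/[(1-p-x_n)^2(x_n-1+2p)]$, whose sign and magnitude must be tracked factor by factor using the defining inequalities of $R$. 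Once invariance and monotonicity are in hand, $u_n$ and $\phi_n$ are bounded monotone sequences and hence convergent to some $u_\infty,\phi_\infty$. Taking limits in the one-line $u$-recurrence forces $\phi_\infty=\phi^\star$, and substituting $\phi_n\to\phi^\star$ into $\phi_{n+1}=G(u_n,\phi_n)$ and solving for a consistent fixed $u$ identifies $u_\infty=d=u^\star$. Consequently $x_n\to x^\star$, $z_n=\phi_n u_n\to\phi^\star u^\star=p(1-p)^4=z^\star$, and the conservation law delivers $y_n\to y^\star$.

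If the sign bookkeeping in the third step becomes unmanageable, a natural fallback is to linearize the reduced two-dimensional map at $(x^\star,z^\star)$. A computation shows that its Jacobian has trace $(1-p)^2$ and determinant $-p^2(2-p)^2$, so the eigenvalues are roots of $\lambda^2-(1-p)^2\lambda-p^2(2-p)^2=0$, and an application of the Schur--Cohn criterion (equivalently checking $p(1)>0$, $p(-1)>0$, and $|a_0|<1$) shows that both roots lie strictly inside the unit disk for every $p\in(0,1)$. This gives local attraction for free, and leaves only the easier task of showing that the Bernoulli starting point eventually enters the linear basin of attraction.
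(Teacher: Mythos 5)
Your first two steps are correct and verifiable: the conservation law $x_n-y_n=p(1-p)^2$ is exactly how the paper reduces to two dimensions, and your change of variables $u_n=1-p-x_n$, $\phi_n=z_n/u_n$ is a genuinely nice reformulation not in the paper --- the identity $u_{n+1}=u_n+d(\phi^\star-\phi_n)$ checks out, as do the values $u^\star=d$, $\phi^\star=(1-p)^2$, $u_0=p(1-p)$, $\phi_0=1-p$. Your fallback computation also agrees with the paper: the Jacobian of the reduced map at $(x^\star,z^\star)$ has trace $(1-p)^2$ and determinant $-p^2(2-p)^2$, and the Schur--Cohn conditions $1\pm(1-p)^2\cdot(\mp 1)-p^2(2-p)^2>0$ and $p^2(2-p)^2<1$ all hold on $(0,1)$, which is precisely the paper's eigenvalue argument for local stability.

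The genuine gap is in your main route, and it is not merely a matter of unfinished bookkeeping. You assert that the rectangle $R$ is forward-invariant with both coordinates nonincreasing, but the determinant $-p^2(2-p)^2<0$ means the Jacobian at the fixed point has one \emph{negative} eigenvalue (at $p=1/2$ the eigenvalues are approximately $0.885$ and $-0.635$), so the approach to $(u^\star,\phi^\star)$ is generically oscillatory along the corresponding eigendirection. Coordinatewise monotone decrease all the way into the fixed point is therefore in real tension with the local dynamics and cannot be taken on faith; at best one could hope for eventual monotonicity along the dominant (positive) eigendirection, which is a different and harder statement to set up. Your fallback does not rescue this: it establishes only local attraction, and the ``easier task of showing that the Bernoulli starting point eventually enters the linear basin'' is exactly the step the paper itself declares ``rather difficult'' and settles by numerical illustration only (Figure~\ref{fig2}). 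So neither of your routes, as described, closes the global-convergence claim; your proposal matches the paper's (incomplete) rigor on the fallback branch and overclaims on the primary branch. You should also state separately the trivial cases $p=0$ and $p=1$, where $d=0$ and the division by $u_n$ in your change of variables degenerates.
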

\emph{Proof.} 
Let us first note that $x_{n+1}-y_{n+1}=x_n-y_n$, thus $x_{n}-y_{n}=\mathrm{const}$. The value of this constant can be 
determined using initial conditions $n=0$, yielding $x_n-y_n=p(1-p)^2$, and therefore $y_n=x_n-p(1-p)^2$.
Using this, we can reduce eqs. (\ref{xyzeq}) to two components only,
\begin{align} \label{xyzeq-reducd}
x_{n+1}&=x_n-p(1-p)^4+\frac{p(1-p)^2 z_n}{1-x_n-p} \nonumber \\
z_{n+1}&=2p(1-p)^4-\frac{2 p(1-p)^2 z_{n}}{1-x_{n}-p}+z_{n}+\frac{p(1-p)^2(1-p-x_{n}-z_{n})^2}{(1-x_{n}-p)(x_{n}-1+2p)} \nonumber \\
&-\frac{p^2(1-p)^4(1-p-x_{n}-z_{n})^2}{(1-x_{n}-p)^2(x_{n}-1+2p)}.
\end{align}
It is easy to verify that $(x^\star, z^\star)$ is the only fixed point of the above system of difference equations. 
Furthermore, the Jacobian of the mapping defined by eqs. (\ref{xyzeq-reducd}), evaluated at $(x^\star, z^\star)$, is equal to
\begin{equation}
 J= \left[ \begin {array}{lr} {p}^{2}-2\,p+2&\,\,\,\,1\\ \noalign{\medskip}
{p}^{4}-4\,{p}^{3}+3\,{p}^{2}+2\,p-2&\,\,\,\,-1\end {array}
 \right]. 
\end{equation}
One can easily check that its eigenvalues are 
$$
\lambda_{1,2}=\frac{1}{2}+\frac{1}{2}{p}^{2}-p \pm \frac{1}{2}\sqrt {1-4\,p+22\,{p}^{2}-20\,{
p}^{3}+5\,{p}^{4}},
$$
and that $|\lambda_{1,2}|<1$ if $p \in (0,1)$. This means that the fixed point $(x^\star, z^\star)$ is locally stable,
so that $(x_n, z_n) \to (x^\star, z^\star)$ as $n\to \infty$ if the initial point
is sufficiently close to $(x^\star, z^\star)$.
\begin{figure} 
\begin{center}
\includegraphics[width=12cm]{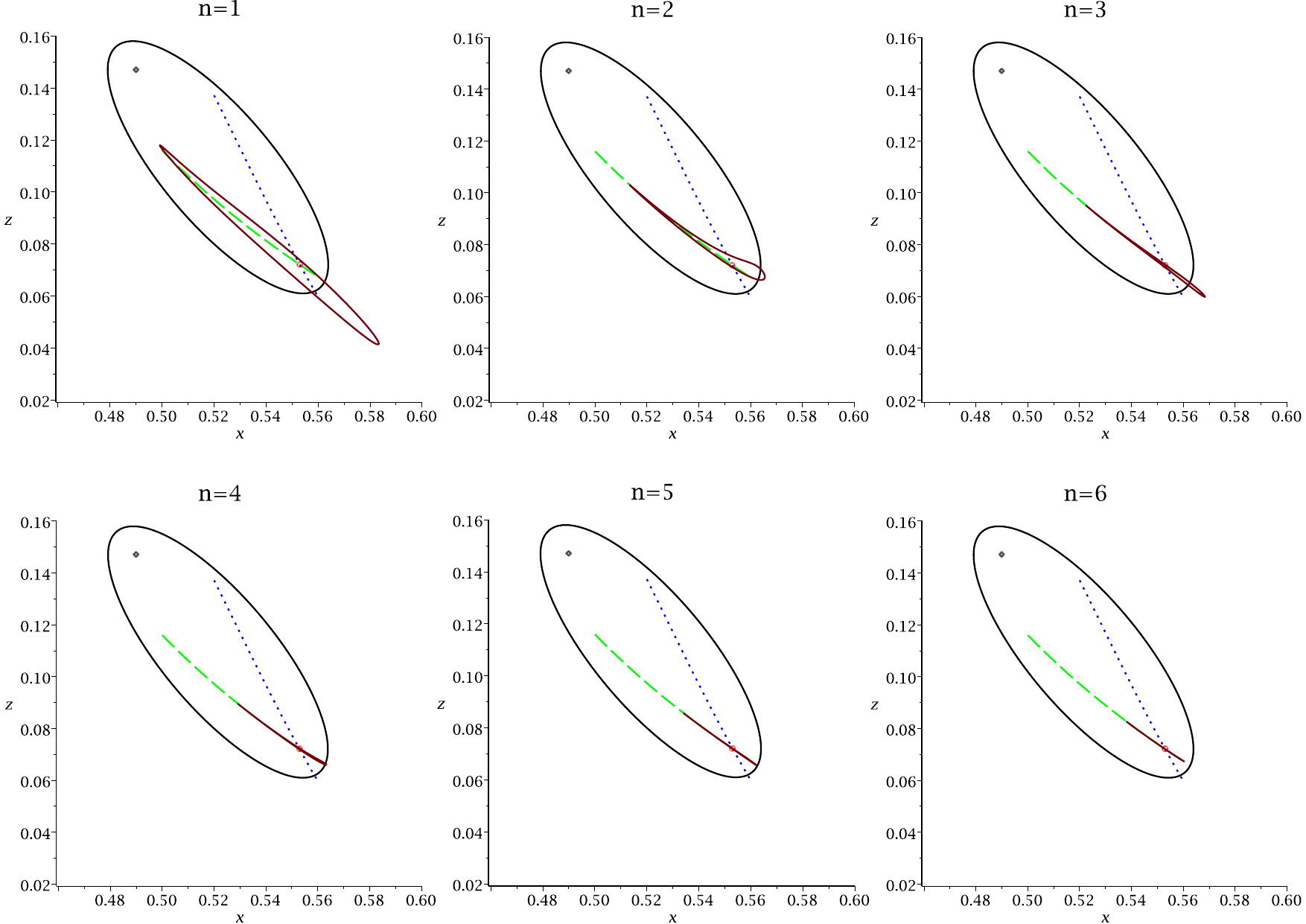}
\end{center}
\caption{Consecutive images (dark red) of a neighbourhoood of the fixed point (black ellipse)  under the map defined by
eqs. (\ref{xyzeq-reducd}), for $p=0.3$. The initial
point $(x_0, z_0)$ is shown as small black diamond. The fixed point $(x^\star, z^\star)$ is represented by small red circle, located at the intersection of two invariant manifolds, shown as dotted and dashed lines (blue and green).
}\label{fig2} 
\end{figure}
Proving formally that $(x_0, z_0) =((1-p)^2, p(1-p)^2)$ is sufficiently close, that is,  lies in the basin of attraction of the stable fixed point
$(x^\star, z^\star)$, is rather difficult, so we will only illustrate it numerically.

Figure~\ref{fig2} shows consecutive images of an elliptical neighbourhood of the fixed point containing $(x_0, z_0)$ under the map defined by
eqs. (\ref{xyzeq-reducd}), using $p=0.3$ as an example. Once can see that the images (red) of the initial disk  shrink with increasing $n$, and that images of all points of the disk converge toward $(x^\star, z^\star)$ (small red circle). The fixed point $(x^\star, z^\star)$ is located
at the intersection of two invariant manifolds, shown as dotted and dashed lines (blue and green). These two manifold
were obtained nymerically by standard methods, similarly as described in Ref. \citen{paper64}.
The green manifold corresponds to  the eigenvalue $\lambda_1$, which is closer to 1 than $\lambda_2$, thus one could call
it ``slow''. Convergence along this manifold is slower than along the blue manifold. Indeed, one can see
that after 6 iterations all points of the image are located almost on the green manifold, or very close to it. Further iterations (not shown) would shrink the image even further along the green line, eventually
converging to a single point. This illustrates that indeed $(x_n, z_n) \to (x^\star, z^\star)$ as $n\to \infty$ if $x_0=(1-p)^2$, $z_0=p(1-p)^2$.

The third component $y_n$  also behaves as expected, 
$y_n \to x^\star - p(1-p)^2= y^\star$, and consequently $(x_n, y_n, z_n) \to (x^\star, y^\star, z^\star)$
for all $p \in (0,1)$. For $p=0$ and $p=1$, direct verification confirms that
$(x_n, y_n, z_n) \to (x^\star, y^\star, z^\star)$ remains valid.

\section{Conclusions}
We presented an example of a rule for which 3-rd order local structure approximation correctly
predicts steady-state probabilities of blocks of length up to 3. This rule possesses additive invariants of order 
1 and three (conserving number of 0s and number of blocks 101).

It is important to note that in order to obtain the aforementioned exact agreement between the local structure theory
and actual values of block probabilities, one needs to make sure that block probabilities which are constant
are included in the local structure equations as constants -- that is, they are not replaced by Bayesian extension
approximations. We experimented with a variant of local structure theory equations in which $P_n(00100)$ is replaced
by its Bayesian extension approximation, and we found that such a variant does not produce correct values
of steady-state block probabilities.

Local structure approximation can be viewed as finite-dimensional (with finite number of block probabilities) approximation of infinitely-dimensional 
dynamical system (there are infinitely many block probabilities needed to describe a measure on $\{0,1\}^\ZZ$).
The existence of a first-order invariant (conservation of the number of zeros and ones) already reduces the
dynamics to a subset of $\{0,1\}^\ZZ$, while
the existence of a blocking word 100  somewhat prevents strong long-range correlations to develop, making 
the system ``nearly finite dimensional''. It seems  to
be reasonable to conjecture that other cellular automata with 
invariants and blocking words should exhibit similar behaviour. In order to explore this further, one needs to find
more examples of ``solvable'' rules first, and this is a problem which we are currently investigating.

\end{document}